\documentclass[a4paper,twocolumn,11pt,accepted=2017-05-09]{quantumarticle}
\pdfoutput=1
\usepackage[utf8]{inputenc}
\usepackage[english]{babel}
\usepackage[T1]{fontenc}
\usepackage{amsmath}
\usepackage{hyperref}

\usepackage{tikz}
\usepackage{lipsum}
\usepackage{amsthm}
\usepackage{color}
\usepackage{amssymb}

\usepackage{braket}
\usepackage[numbers,sort&compress]{natbib}
\usepackage{enumitem}
\usepackage{longtable}
\newcommand{\BH}{{\mathcal{B}(\mathcal{H})}}

\newcommand\map{S}
\newtheorem{thm}{Theorem}

\newtheorem{pro}{Proposition}

\newtheorem{rem}{Remark}
\newtheorem{ex}{Example}
\newcommand{\tr}{\operatorname{tr}}
\newcommand{\adjug}{\#}
\newcommand{\conj}{\dagger}

\begin{document}

\title{On the Alberti-Uhlmann Condition for Unital Channels}

\author{Sagnik Chakraborty}
\email{sagnik@fizyka.umk.pl}
\affiliation{Institute of Physics, Faculty of Physics, Astronomy and Informatics, Nicolaus Copernicus University,
Grudzi\k{a}dzka 5/7, 87-100 Toru\'n, Poland}

\author{Dariusz Chru{\'s}ci{\'n}ski}
\email{darch@fizyka.umk.pl}
\affiliation{Institute of Physics, Faculty of Physics, Astronomy and Informatics, Nicolaus Copernicus University,
Grudzi\k{a}dzka 5/7, 87-100 Toru\'n, Poland}

\author{Gniewomir Sarbicki}
\email{gniewko@fizyka.umk.pl}
\affiliation{Institute of Physics, Faculty of Physics, Astronomy and Informatics, Nicolaus Copernicus University,
Grudzi\k{a}dzka 5/7, 87-100 Toru\'n, Poland}

\author{Frederik vom Ende}
\email{frederik.vom-ende@tum.de}
\affiliation{Department of Chemistry, Technische Universit\"at M\"unchen, 85747 Garching, Germany}
\affiliation{Munich Centre for Quantum Science and Technology (MCQST), Schellingstr.~4,
80799 M\"unchen, Germany}

\maketitle

\begin{abstract}
  We address the problem of existence of completely positive trace preserving (CPTP) maps between two sets of density matrices. We refine the result of Alberti and Uhlmann and derive a necessary and sufficient condition for the existence of a unital channel between two pairs of qubit states which ultimately boils down to three simple inequalities.
\end{abstract}

\section{Introduction}

A fundamental aspect of resource theories is finding conditions which characterize the possibility of state-conversion via ``allowed'' operations. In quantum thermodynamics, for example, one usually asks whether a state $\tau$ can be generated from an initial state $\rho$ via a Gibbs-preserving channel, that is, a completely positive and trace-preserving (\textsc{cptp}) map which leaves the Gibbs-state of the system invariant \cite{horodecki13, brandao, gour, buscemi1, vE1}. Sometimes one limits the problem from Gibbs-preserving channels to the smaller set of thermal operations which are of a specific form compliant with the underlying resource theory. However, in what follows we focus on general \textsc{cptp} maps.  {Recall that complete positivity of a linear map $T : M_n(\mathbb{C})\to M_n(\mathbb{C})$ is equivalent to positivity of ${\rm id} \otimes T : M_n(M_n(\mathbb{C})) \to M_n(M_n(\mathbb{C}))$} \cite{Paulsen,Stormer}.

The above is just a special case of the following more general question: given two pairs of quantum states $\{\tau_1,\tau_2\}$, $\{\rho_1,\rho_2\}$ can one find a characterization (which is non-trivial and possibly simple to verify) for the existence of a quantum channel which transforms both $\rho_j$ into $\tau_j$? While this problem is fully solved in the classical case (for an overview on the equivalent conditions see \cite[Ch.~V]{gour} or \cite[Prop.~4.2]{vE2}), in the quantum realm this remains unanswered, with the qubit case being the notable exception: In a seminal paper \cite{alberti} Alberti and Uhlmann derived a necessary and sufficient condition for such a simultaneous state conversion. The Alberti-Uhlmann condition reads
\begin{equation}  \label{AU}
\Vert \rho_1-t\rho_2\Vert_1\geq\Vert \tau_1-t\tau_2\Vert_1~~~~~ \forall ~t\geq 0,
\end{equation}
where $\Vert A\Vert_1=\tr\sqrt{A^{\conj}A}$ denotes the trace-norm \citep{nielson}. Note that \eqref{AU} is equivalent to
\begin{equation}
\Vert p_1\rho_1-p_2\rho_2\Vert_1\geq\Vert p_1\tau_1-p_2\tau_2\Vert_1,
\label{AU1}
\end{equation}
for all $p_1,p_2\geq 0$ and $p_1+p_2=1$. Due to Helstrom theory \cite{Hel} condition (\ref{AU1}) means that distinguishability of states $\{\rho_1,\rho_2\}$ given with probabilities $\{p_1,p_2\}$ is not less than distinguishability of states $\{\tau_1,\tau_2\}$ (occurring with the same probabilities).
For the sake of computation \eqref{AU} can be reduced to finitely many inequalities via the formulae
\begin{align}
\|A\|_1^2&=\tr(A^\conj A)+2|\det(A)|\notag\\
\det(A+B)&=\det(A)+\det(B)+\tr(A^\adjug B)\label{eq:det_sum}
\end{align}
for all $A,B\in M_2(\mathbb C)$ where $(\cdot)^\adjug $ denotes the adjugate \citep[Ch.~0.8]{HJ1}
{ given by
\begin{equation*}
M^\adjug=\begin{bmatrix}
d & -b\\ -c & a
\end{bmatrix}, \text{where}~
M=\begin{bmatrix}
a & b\\ c & d
\end{bmatrix} .
\end{equation*}
Thus} the function resulting from \eqref{AU} is piecewise quadratic in $t$ so non-negativity reduces to certain conditions on the coefficients (with respect to $t$).

It is well known that every positive trace-preserving (\textsc{ptp}) map is a contraction in the trace-norm and hence the Alberti-Uhlmann condition (\ref{AU}) is necessary for the existence of a \textsc{ptp} map $T$. Interestingly,
this condition is sufficient in two dimensions --
even for the existence of a quantum channel --
and fails to be sufficient for dimension three and larger \citep[Ch.~VII.B]{heino}.


In \citep{chefles}, Chefles et al.~generalized the problem to input and output sets 
$\{\rho_1,\ldots,\rho_n\}$ and $\{\tau_1,\ldots,\tau_n\}$, respectively, with arbitrary dimension and arbitrary value of $n$, under the constraint that at least one of the two sets must be a set of pure states. They derived conditions for the existence of a \textsc{cptp} map between the sets in terms of the Gram matrices 
of the two sets.

A result for arbitrary (non-pure) states was derived by Huang et al.~\cite{poon} where their characterization (of existence of a \textsc{cptp} map) goes via the existence of some more abstract decomposition of the initial and target states. More interestingly, they considered the case of qubit states $\{\rho_1,\rho_2,\rho_3\}$, $\{\tau_1,\tau_2,\tau_3\}$ under the generic assumption of the input states being pure (cf.~footnote \footnote{
For arbitrary qubit states $\rho_1,\rho_2$ there exists $c\in[0,1)$ such that $\tilde\rho_1:=\frac{\rho_1-c\rho_2}{1-c}$ is a pure state. 
Thus the problem $\{\rho_1,\ldots,\rho_n\}\to\{\tau_1,\ldots,\tau_n\}$ by linearity of the desired channel can be reduced to $\{\tilde\rho_1,\ldots,\tilde\rho_n\}\to\{\tilde\tau_1,\ldots,\tilde\tau_n\}$.
}).
However the characterization they derived, while verifiable with standard software, seems to not generalize to a condition for arbitrary input states and, moreover, seems to not lead to much physical insight.

In \citep{heino}, Heinosaari et al.~considered a slightly different version of the problem and studied conditions for the existence of only \textsc{cp} maps between two sets of quantum states. Moreover they gave a fidelity characterization for the existence of a \textsc{cptp} transformation on pairs of qubit states, consisting of only a finite number of conditions. 

In a more recent paper \citep{buscemi1}, Dall'Arno et al.~derived a condition for the existence of a \textsc{cptp} map when the input set is a collection of qubit states which can be, through a simultaneous unitary rotation, written as real matrices. They study the problem from the perspective of quantum statistical comparison and derive that if the testing region of the real input states includes the testing region of the output states then there exists a \textsc{cptp} map connecting them.
{ More precisely, the main result of \citep{buscemi1} says that a set of qubit states represented by real density matrices $\{\rho_1,\ldots,\rho_n\}$ can be mapped via a quantum channel to a set $\{\tau_1,\ldots,\tau_n\}$ if and only if the corresponding testing regions satisfy
\begin{equation}\label{RR}
  \mathcal{R}(\{ \tau_k\}) \subset  \mathcal{R}(\{ \rho_k\}) ,
\end{equation}
where $ \mathcal{R}(\{ \rho_k\}) $ is a set of vectors $ \mathbf{x} \in \mathbb{R}^n$ which can be realized via
$$   x_k = {\rm Tr}(\rho_k \pi) \ , \ \ \ k=1,\ldots, n , $$
for some $0\leq \pi \leq \openone$. Similarly one defines $ \mathcal{R}(\{ \tau_k\})$. } For further analysis on the relation of this problem to quantum statistical comparison see \citep{buscemi2,buscemi3}.

In this paper we refine the original Alberti-Uhlmann problem by asking about the existence of a unital channel, that is, $T$ maps $\rho_k$ to $\tau_k$ and additionally $T(\openone) = \openone$. Original condition (\ref{AU}) guarantees the existence of a \textsc{cptp} map but does say nothing  whether $T$ is unital. Clearly, condition (\ref{AU}) is again necessary but no longer sufficient. Note, that the map $T$ is uniquely defined only on an at most 3-dimensional subspace $\mathcal{M}$ spanned by $\{\openone,\rho_1,\rho_2\}$ and one asks whether this map can be extended to the whole algebra $M_2(\mathbb{C})$ such that the extended map is \textsc{cptp} and unital. Extension problems such as this one were already considered by many authors  before \cite{Paulsen}. The classical result of Arveson \cite{Arveson} says that if $\mathcal{M}$ is an operator system in $\BH$, that is, $\mathcal{M}$ is a linear subspace closed under hermitian conjugation and containing $\openone$, and if  $\Phi : \mathcal{M} \to \BH$ is unital and completely positive, then it can be extended to a unital completely positive map $\widetilde{\Phi} : \BH \to \BH$. Note, however, that this result says nothing about trace-preservation. Hence, even if the unital map $\Phi$ is trace-preserving the unital extension $\widetilde{\Phi}$ need not be trace-preserving. Actually, unitality may be relaxed by assuming that the hermitian subspace $\mathcal{M}$ contains a strictly  positive operator  \cite{heino}. Interestingly, it was shown \cite{Jencova} that if $\mathcal{M}$ is spanned by positive operators and $\Phi$ is completely positive then there exists a completely positive extension $\widetilde{\Phi}$.

The main result of this paper reads as follows. { Let $\{\rho_1,{\rho_2}\}$ and  $\{\tau_1,\tau_2\}$ be two pairs of qubit states and consider the two subspaces of $M_2(\mathbb{C})$
\begin{align*}
\mathcal{M}&:= {\rm span}\{{\rho_0=}\openone,\rho_1,\rho_2\}, \\ \mathcal{N}&:= {\rm span}\{{\tau_0=}\openone,\tau_1,\tau_2\}.
\end{align*}
In what follows we assume w.l.o.g.~that $\operatorname{dim} \mathcal{M} = 
3$---indeed $\operatorname{dim} \mathcal{M} =1$ is trivial and $\operatorname{dim} \mathcal{M} =2$ can be traced back to the original result of Alberti and Uhlmann.}

\begin{thm}\label{mainthm}
The following statements are equivalent:
\begin{itemize}
\item[(i)] There exists a unital quantum channel mapping $\{\rho_1,{\rho_2}\}$ to $\{\tau_1,\tau_2\}$.\smallskip
\item[(ii)] {There exists a unital \textsc{ptp} map which sends $\{\rho_1,{\rho_2}\}$ to $\{\tau_1,\tau_2\}$.}\smallskip
\item[(iii)] For all $ \alpha, \beta, \gamma \in \mathbb{R}$
  \begin{eqnarray}\label{MAIN}
    && \lVert \alpha \openone + \beta \tau_1 + \gamma \tau_2 \rVert_1  \le \lVert \alpha \openone + \beta \rho_1 + \gamma \rho_2 \rVert_1\ \ \ \
  \end{eqnarray}
\item[(iv)] For all $\beta, \gamma \in \mathbb{R}$
  \begin{eqnarray}\label{MAIN_2}
    && \lVert  \tfrac{\openone}{2}+ \beta \tau_1 + \gamma \tau_2 \rVert_1  \le \lVert \tfrac{\openone}{2} + \beta \rho_1 + \gamma \rho_2 \rVert_1
  \end{eqnarray}
  \item[(v)] For all $t\in\mathbb R$ one has $\tau_1-t\tau_2 \prec \rho_1-t\rho_2$.
\end{itemize}
\end{thm}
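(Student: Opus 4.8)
The plan is to establish the chain $(i)\Rightarrow(ii)\Rightarrow(iii)$, the side-equivalences $(iii)\Leftrightarrow(iv)$ and $(iii)\Leftrightarrow(v)$, and finally $(v)\Rightarrow(i)$ to close the loop. The implication $(i)\Rightarrow(ii)$ is immediate, since every channel is in particular a \textsc{ptp} map. For $(ii)\Rightarrow(iii)$ I would use that every \textsc{ptp} map $T$ is a trace-norm contraction on Hermitian operators (decompose $X=X_+-X_-$ and use positivity together with trace-preservation), while unitality and linearity give $T(\alpha\openone+\beta\rho_1+\gamma\rho_2)=\alpha\openone+\beta\tau_1+\gamma\tau_2$; applying the contraction yields \eqref{MAIN}. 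For $(iii)\Leftrightarrow(iv)$ the inequality is positively homogeneous in $(\alpha,\beta,\gamma)$ and invariant under an overall sign, so rescaling the case $\alpha>0$ to $\alpha=\tfrac12$ and reflecting $\alpha<0$ to $\alpha>0$ reduces everything to (iv), with the boundary case $\alpha=0$ recovered by continuity of $\lVert\cdot\rVert_1$.

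For $(iii)\Leftrightarrow(v)$ I would invoke the standard majorization characterization for Hermitian matrices of equal trace: $A\prec B$ iff $\tr A=\tr B$ and $\tr(A-s\openone)_+\le\tr(B-s\openone)_+$ for all $s\in\mathbb R$, which via $\tr(X)_+=\tfrac12(\lVert X\rVert_1+\tr X)$ becomes $\lVert A-s\openone\rVert_1\le\lVert B-s\openone\rVert_1$. Taking $A=\tau_1-t\tau_2$ and $B=\rho_1-t\rho_2$ (which automatically share the trace $1-t$) this is exactly \eqref{MAIN} with $\beta=1$, $\gamma=-t$, $\alpha=-s$; the general $\beta\neq0$ follows by homogeneity and the sign symmetry, and $\beta=0$ again by continuity. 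It is illuminating to record the Bloch form: writing $\rho_j=\tfrac12(\openone+\vec r_j\cdot\vec\sigma)$ and $\tau_j=\tfrac12(\openone+\vec s_j\cdot\vec\sigma)$, condition (v) is equivalent to the purely geometric statement $\lvert\vec s_1-t\vec s_2\rvert\le\lvert\vec r_1-t\vec r_2\rvert$ for all $t$, since the two matrices share the eigenvalue offset $\tfrac{1-t}{2}$ and differ only in the length of their Bloch vectors.

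The main work is $(v)\Rightarrow(i)$, i.e.\ actually constructing the channel. Because $\dim\mathcal M=3$ the vectors $\vec r_1,\vec r_2$ are linearly independent and span a plane $P\subset\mathbb R^3$; the geometric reformulation above says precisely that the linear map $L:P\to\mathbb R^3$ determined by $L\vec r_j=\vec s_j$ is a contraction, $\lvert L\vec w\rvert\le\lvert\vec w\rvert$ for all $\vec w\in P$ (the vectors $\vec r_1-t\vec r_2$ sweep out $P$ up to scaling and limits). I would then apply a singular value decomposition adapted to $P$: choose an orthonormal eigenbasis $e_1,e_2$ of $P$ with $Le_i=\varsigma_i f_i$, where $\varsigma_i\in[0,1]$ and $f_1,f_2$ are orthonormal, complete to right-handed orthonormal frames $\{e_1,e_2,e_3\}$ and $\{f_1,f_2,f_3\}$, and fix the unknown action by $Me_3=\mu f_3$. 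In these frames $M=O_1\,\mathrm{diag}(\varsigma_1,\varsigma_2,\mu)\,O_2$ with $O_1,O_2\in SO(3)$, so by the Fujiwara–Algoet/King–Ruskai characterization $M$ is the Bloch matrix of a unital \textsc{cptp} qubit channel precisely when $(\varsigma_1\pm\varsigma_2)^2\le(1\pm\mu)^2$. The crux is that the single free parameter $\mu$ suffices: these inequalities demand $\varsigma_1+\varsigma_2-1\le\mu\le1-\lvert\varsigma_1-\varsigma_2\rvert$, an interval which is nonempty exactly because $\max(\varsigma_1,\varsigma_2)\le1$ and which lies inside $[-1,1]$. Any admissible $\mu$ then yields a unital channel sending $\rho_j\mapsto\tau_j$, closing the cycle.

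I expect the genuine obstacle to be this final step: one must first recognize that condition (v) collapses to a pure contraction statement on the Bloch plane $P$, and then exploit the one remaining degree of freedom $\mu$ against the Fujiwara–Algoet inequalities. The nontrivial point is that \emph{complete} positivity—rather than mere positivity, i.e.\ $\lVert M\rVert\le1$, which the transpose map shows to be insufficient—can always be arranged with the leftover parameter. The bookkeeping of orientations (so that both frame changes lie in $SO(3)$ and not merely $O(3)$) and the degenerate cases $\varsigma_i=0$ are routine but still need to be dispatched.
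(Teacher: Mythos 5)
Your proposal is correct and follows essentially the same route as the paper: the contractivity argument for (ii)$\Rightarrow$(iii), homogeneity/continuity for (iii)$\Leftrightarrow$(iv), the order-free (trace-norm) characterization of majorization for (v), and---for the constructive direction---a reduction by singular value decomposition to a Pauli-diagonal map whose singular values are at most $1$, followed by choosing the free third parameter to satisfy the Fujiwara--Algoet conditions. The only differences are cosmetic: you phrase the construction in Bloch-vector language and close the cycle via (v)$\Rightarrow$(i), whereas the paper works with operator systems and proves (iii)$\Rightarrow$(i) directly.
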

Here $\prec$ denotes classical matrix majorization which is usually defined via the comparison of eigenvalues { but can be characterized as follows (e.g., \cite[Ch.~7]{ando}):
 for any two Hermitian $n \times n $ matrices one has $B  \prec  A$ iff there is a unital quantum channel $T$ such that $B = T(A)$. Interestingly    \cite[Thm.~7.4]{ando} $B  \prec  A$ iff for every real $t$ one has $\| \openone - t B\|_1 \leq \|\openone - t A\|_1$ which may be considered the order-free characterization of majorization.}

Clearly, the original Alberti-Uhlmann condition (\ref{AU}) provides now only the necessary condition corresponding to $\alpha=0$ in (\ref{MAIN}). 

While the conditions in Theorem \ref{mainthm} give conceptional insight, one can also reduce the problem { to} three easy-to-verify inequalities.
\begin{thm}\label{mainthm_2}
There exists a unital quantum channel mapping qubit states $\{\rho_1,\rho_2\}$ to $\{\tau_1,\tau_2\}$ if and only if  $ \det(\tau_j)\geq\det(\rho_j)$ for $j=1,2$ as well as
  \begin{equation}\label{MAIN_3}
    \begin{split}
     &\big( \tr( \rho_1^\adjug \rho_2 )-\tr( \tau_1^\adjug \tau_2 ) \big)^2\\
     &\leq 4\big(\det(\tau_1)-\det(\rho_1)\big)\big(\det(\tau_2)-\det(\rho_2)\big).
     \end{split}
  \end{equation}
\end{thm}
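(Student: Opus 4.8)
The plan is to obtain Theorem~\ref{mainthm_2} as a direct consequence of Theorem~\ref{mainthm}, using condition~(v) as the bridge: it already encodes the whole problem through $2\times2$ majorization, and majorization of Hermitian $2\times2$ matrices is governed entirely by traces and determinants. First I would record the elementary characterization of majorization in dimension two: for Hermitian $A,B\in M_2(\mathbb C)$ one has $B\prec A$ if and only if $\tr(B)=\tr(A)$ and $\det(B)\ge\det(A)$. Indeed, equality of traces is the $k=n$ equality built into the definition of majorization, and once the traces agree the larger eigenvalue $\tfrac12\big(\tr(A)+\sqrt{\tr(A)^2-4\det(A)}\big)$ is a decreasing function of $\det(A)$, so the single remaining majorization inequality $\lambda_{\max}(B)\le\lambda_{\max}(A)$ is equivalent to $\det(B)\ge\det(A)$.

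Next I would apply this to $B=\tau_1-t\tau_2$ and $A=\rho_1-t\rho_2$. Since every $\rho_j,\tau_j$ has unit trace, both matrices have trace $1-t$, so the trace condition in the majorization criterion holds automatically for every $t$. Hence condition~(v) collapses to the single scalar requirement
\begin{equation*}
\det(\tau_1-t\tau_2)\ \ge\ \det(\rho_1-t\rho_2)\qquad\text{for all }t\in\mathbb R .
\end{equation*}
Expanding both sides with the determinant-sum identity \eqref{eq:det_sum} together with $\det(cM)=c^2\det(M)$ turns each side into a quadratic polynomial in $t$, so the displayed inequality becomes $q(t):=a\,t^2+b\,t+c\ge 0$ for all real $t$, where $a=\det(\tau_2)-\det(\rho_2)$, $c=\det(\tau_1)-\det(\rho_1)$, and $b=\tr(\rho_1^\adjug\rho_2)-\tr(\tau_1^\adjug\tau_2)$.

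Finally I would invoke the standard criterion for a real quadratic to be non-negative on all of $\mathbb R$: this holds iff $a\ge 0$, $c\ge 0$, and the discriminant satisfies $b^2\le 4ac$. The first two conditions are exactly $\det(\tau_2)\ge\det(\rho_2)$ and $\det(\tau_1)\ge\det(\rho_1)$, i.e.\ $\det(\tau_j)\ge\det(\rho_j)$ for $j=1,2$, while $b^2\le 4ac$ is precisely \eqref{MAIN_3}. I would also check that the degenerate case $a=0$ (where $q$ is genuinely linear) is subsumed correctly, since then $b^2\le 4ac=0$ forces $b=0$ and $c\ge 0$ still guarantees $q\ge 0$. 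Because each link in this chain is a biconditional, chaining them yields the claimed ``if and only if''.

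The main obstacle is conceptual rather than computational, and it has in effect already been surmounted in Theorem~\ref{mainthm}: everything rests on the reduction of the existence of a unital channel to condition~(v). Granting that, the only points demanding genuine care are the clean reduction of $2\times2$ majorization to the determinant comparison---in particular verifying that the automatic equality of traces lets the top-eigenvalue inequality be read off as a determinant inequality---and the correct bookkeeping of the degenerate quadratic; the expansion of the determinants via \eqref{eq:det_sum} is then routine.
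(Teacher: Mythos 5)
Your proposal is correct and follows essentially the same route as the paper: reduce to condition (v) of Theorem~\ref{mainthm}, use that for Hermitian $2\times2$ matrices of equal trace majorization is equivalent to the reverse determinant inequality, expand $\det(\tau_1-t\tau_2)-\det(\rho_1-t\rho_2)$ via \eqref{eq:det_sum} into a quadratic in $t$, and apply the non-negativity criterion $a,c\ge 0$, $b^2\le 4ac$. Your write-up is in fact slightly more careful than the paper's (spelling out the eigenvalue monotonicity and the degenerate case $a=0$), and the sign discrepancy in the linear coefficient $b$ relative to the paper's display is immaterial since only $b^2$ enters the final conditions.
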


\section{Proof of Alberti-Uhlmann conditions for unital maps}


{
Consider a linear map $T : \mathcal{M} \to \mathcal{N}$ mapping $\rho_k$ to $\tau_k$.  By construction this map is unital and preserves the trace as well as hermiticity.
%

\begin{pro} \label{prop_unitary_channel} The map $T$ is unitarily equivalent to a Pauli diagonal map $S : \Sigma \to \Sigma$ where, here and henceforth, $\Sigma = \operatorname{span}\{\openone,\sigma_x,\sigma_y\}$.
\end{pro}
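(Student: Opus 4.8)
The plan is to exploit the Bloch representation together with the double cover $\mathrm{SU}(2)\to\mathrm{SO}(3)$: conjugation $X\mapsto UXU^\conj$ by a unitary $U$ fixes $\openone$ and acts as a rotation $R\in\mathrm{SO}(3)$ on the space of traceless Hermitian matrices, identified with $\mathbb R^3$ via $\frac12\,\vec r\cdot\vec\sigma$ with $\vec\sigma=(\sigma_x,\sigma_y,\sigma_z)$. Writing $\rho_j=\frac12(\openone+\vec r_j\cdot\vec\sigma)$ and $\tau_j=\frac12(\openone+\vec s_j\cdot\vec\sigma)$, the standing hypothesis $\dim\mathcal M=3$ says that $\vec r_1,\vec r_2$ are linearly independent and hence span a plane $V_{\mathcal M}\subset\mathbb R^3$, while the $\vec s_j$ span a subspace $V_{\mathcal N}$ of dimension at most two.

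First I would use the transitivity of $\mathrm{SO}(3)$ on planes through the origin to choose unitaries $U,W$ whose associated rotations carry $V_{\mathcal M}$ and $V_{\mathcal N}$ into the $xy$-plane $\operatorname{span}\{\sigma_x,\sigma_y\}$. Since conjugation fixes $\openone$ and is complex-linear, this yields $U\mathcal M U^\conj=\Sigma$ and $W\mathcal N W^\conj\subseteq\Sigma$, so the conjugated map $S(X):=W\,T(U^\conj X U)\,W^\conj$ sends $\Sigma$ into $\Sigma$ and inherits unitality, trace preservation and hermiticity preservation from $T$ (and any \textsc{cptp} or unital extension of one transfers to the other, which is precisely why passing from $T$ to $S$ costs no generality).

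Next, because $S$ is unital and trace preserving, on the Hermitian part of $\Sigma$ it fixes the $\openone$-component and acts on the Bloch coordinates $(r_x,r_y)$ by a genuine $2\times2$ real matrix $M_S$, trace preservation ruling out any $\openone$-component in $S(\sigma_x)$ or $S(\sigma_y)$. The residual freedom consists of conjugations by $e^{-i\theta\sigma_z/2}$, which rotate the $xy$-plane and thus let me precompose and postcompose $M_S$ by arbitrary elements of $\mathrm{SO}(2)$. I would then invoke the singular value decomposition in the form $M_S=R_1\,\mathrm{diag}(\lambda_x,\lambda_y)\,R_2$ with $R_1,R_2\in\mathrm{SO}(2)$, carrying the sign of $\det M_S$ into $\lambda_y$, absorb $R_2$ into $U$ and $R_1$ into $W$, and conclude that the resulting $S$ satisfies $S(\openone)=\openone$, $S(\sigma_x)=\lambda_x\sigma_x$ and $S(\sigma_y)=\lambda_y\sigma_y$, i.e.\ $S$ is Pauli diagonal.

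The step I expect to be the main obstacle is this last one: verifying that the orthogonal transformations needed to diagonalize $M_S$ are exactly those realizable by $\Sigma$-preserving unitary conjugations. Concretely, one must check that the stabilizer of the $xy$-plane inside $\mathrm{SO}(3)$ restricts, on that plane, to the orthogonal group the singular value decomposition requires; rotations about $z$ supply all of $\mathrm{SO}(2)$, and this suffices once the sign ambiguity is absorbed into $\lambda_y$, so no in-plane reflection (which would in any case also be available, via a $\pi$-rotation about an in-plane axis) is actually needed. Everything else—transitivity of the rotation action on planes, and the bookkeeping that $S$ preserves the structural properties of $T$—is routine.
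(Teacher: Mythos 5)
Your proposal is correct and takes essentially the same route as the paper: rotate both operator systems so that their traceless parts lie in the $\sigma_x$--$\sigma_y$ plane (the paper phrases this as making the Hilbert--Schmidt orthogonal complement equal to $\sigma_z$), observe that unitality, trace preservation and hermiticity preservation force the induced map to act as a real $2\times 2$ matrix on the $(\sigma_x,\sigma_y)$ coordinates, and then diagonalize that matrix by the singular value decomposition, absorbing the $\mathrm{SO}(2)$ factors into conjugations by $\operatorname{diag}(1,e^{i\phi})$ and the residual reflection sign into one of the singular values. The only difference is cosmetic---your Bloch-sphere/$\mathrm{SO}(3)$ language versus the paper's explicit matrix computation.
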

}
\begin{proof} Since $\mathcal{M}$ is a 3-dimensional operator system in $M_2(\mathbb{C})$ there exists a {hermitian} operator orthogonal to $\mathcal{M}$ w.r.t.~the Hilbert-Schmidt inner product. Being orthogonal to $\openone$ it is traceless so after choosing an appropriate basis it equals $\sigma_z$.
Analogously, choosing an appropriate basis on the output Hilbert space makes $\mathcal{N}$ orthogonal to $\sigma_z$. More precisely this yields unitary $\tilde U,\tilde V\in M_2(\mathbb C)$  such that
$$   \tilde{U}^\conj \, \mathcal{M} \, \tilde{U} \perp \sigma_z \ ; \ \ \  \tilde{V}^\conj \, \mathcal{N} \, \tilde{V} \perp \sigma_z \,. $$
Now, define the map $\tilde\map(\rho):=\tilde V^\conj T(\tilde U\rho \tilde U^\conj )\tilde V$. Since $\tilde\map$  preserves hermiticity and the trace one has
$$
  \tilde\map(\sigma_{x}) = \left[ \begin{array}{cc} 0 & z_x \\  {z^*_x}  & 0 \end{array} \right], \qquad   \tilde\map(\sigma_{y}) = \left[ \begin{array}{cc} 0 & z_y \\ {z^*_y}& 0 \end{array} \right] ,
$$
with $z_x,z_y \in \mathbb{C}$. Hence the action of $\tilde\map$ can be represented via
$$
\begin{bmatrix} \openone \\ \sigma_x \\\sigma_y   \end{bmatrix}\ \to \
\begin{bmatrix} 1&0&0\\0&a_{11}&a_{12}\\0&a_{21}&a_{22} \end{bmatrix} \begin{bmatrix} \openone \\ \sigma_x \\\sigma_y   \end{bmatrix} ,
$$
for some $A=(a_{jk})_{j,k=1}^2\in\mathbb R^{2\times 2}$. Applying the singular value decomposition to $A$ one finds orthogonal matrices $W_1,W_2\in M_2(\mathbb R)$ such that $W_1^TAW_2=\operatorname{diag}(a,b)$ with $a,b\geq 0$ \cite[Thm.~7.3.5]{HJ1}. But every orthogonal $2\times 2$ matrix is a rotation matrix (possibly up to a composition with $\sigma_z$) \cite[p.~68]{HJ1}. {Hence,} the channel $S_\phi(\rho):=U_\phi^\conj \rho U_\phi$ with $U_\phi=\operatorname{diag}(1,e^{i\phi})$, $\phi\in\mathbb R$  leaves $\operatorname{span}\{\openone,\sigma_x,\sigma_y\}$ invariant and acts on $(\openone,\sigma_x,\sigma_y)$ as
$$ \begin{bmatrix} \openone \\ \sigma_x \\\sigma_y   \end{bmatrix}\ \to \
\begin{bmatrix} 1&0&0\\0&\cos\phi&-\sin\phi\\0&\sin\phi&\cos\phi \end{bmatrix} \begin{bmatrix} \openone \\ \sigma_x \\\sigma_y   \end{bmatrix}\ \,.
$$
Now, there exist $\phi,\theta\in\mathbb R$ such that
$
S_\phi\circ\tilde S\circ S_{-\theta}$ corresponds to the diagonal matrix $\operatorname{diag}(1,a,b)
$ or $\operatorname{diag}(1,a,-b)$. Thus one has
\begin{equation}\label{}
  S(\rho) =  V^\conj T(U\rho  U^\conj ) V ,
\end{equation}
where $U=\tilde UU_\theta$, $V=\tilde VU_\phi$, that is, the map $T$ is unitarily equivalent to the Pauli diagonal map $S$ defined on $\Sigma$.
\end{proof}

{
Therefore it is clear that  the unital map $T : \mathcal{M} \to \mathcal{N}$ can be extended to a unital \textsc{cp} map $\mathbf{T} : M_2(\mathbb{C}) \to M_2(\mathbb{C})$ if and only if the unital Pauli diagonal map $S : \Sigma \to \Sigma$ can be extended  to a unital \textsc{cp} map $\mathbf{S} : M_2(\mathbb{C}) \to M_2(\mathbb{C})$. Now, $T : \mathcal{M} \to \mathcal{N}$ is a positive map if and only if $S : \Sigma \to \Sigma$ is positive. However, for Pauli diagonal maps the positivity conditions are well-known \cite{Ruskai}: $S$ is positive  if and only if $|a|,|b|\leq 1$.
%
Now, if $S : \Sigma \to \Sigma$ is positive it can be always extended to a unital \textsc{ptp} map $\mathbf{S}$ on $M_2(\mathbb{C})$. 
Indeed, one may use the trivial extension defining
\begin{equation}\label{}
  \mathbf{S}(\sigma_z) = 0 .
\end{equation}
\begin{rem}  Actually it was shown \cite{qubit} that if $\mathcal{M}$ is a 3-dimensional operator system, then there exists a unital \textsc{ptp} projector  $\Pi : M_2(\mathbb{C}) \to \mathcal{M}$. Hence $\mathbf{S} := S \circ \Pi$ defines a unital \textsc{ptp} extension of $S$. Interestingly, however, it was proven \cite{qubit} that there is no \textsc{cptp} projector  $\Pi : M_2(\mathbb{C}) \to \mathcal{M}$.
\end{rem}
Yet one can find an extension of $S$ which is not only \textsc{ptp} but even \textsc{cptp}:
\begin{pro} \label{EXT} If a Pauli diagonal map $S : \Sigma \to \Sigma$ is positive it can always be extended to unital \textsc{cptp} map $\mathbf{S}$ on $M_2(\mathbb{C})$.
\end{pro}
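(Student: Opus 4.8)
The plan is to extend $S$ to a genuine Pauli channel on all of $M_2(\mathbb{C})$ by choosing an appropriate image for $\sigma_z$. Recall that any unital, trace- and hermiticity-preserving Pauli diagonal map $\mathbf{S}$ on $M_2(\mathbb{C})$ is completely specified by its eigenvalues $(\lambda_x,\lambda_y,\lambda_z)\in\mathbb{R}^3$ via $\mathbf{S}(\sigma_j)=\lambda_j\sigma_j$ for $j\in\{x,y,z\}$ together with $\mathbf{S}(\openone)=\openone$. Since $S$ already fixes $\lambda_x=a$ and $\lambda_y=b$ (or $\lambda_y=-b$), with $|a|,|b|\leq 1$ guaranteed by the positivity of $S$ established above, the only remaining freedom is the choice of $\lambda_z$. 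I would therefore search for a value of $\lambda_z$ that renders $\mathbf{S}$ completely positive.

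The key tool is the standard characterization of complete positivity for Pauli channels. Writing $\mathbf{S}$ in the Pauli basis as a mixture $\mathbf{S}(\rho)=\sum_{\mu=0}^3 p_\mu\,\sigma_\mu\rho\,\sigma_\mu$ with $(\sigma_0,\sigma_1,\sigma_2,\sigma_3)=(\openone,\sigma_x,\sigma_y,\sigma_z)$, the weights and the eigenvalues are related by an invertible (Hadamard-type) linear transformation, so that $\mathbf{S}$ is \textsc{cp}---equivalently all $p_\mu\geq 0$---precisely when the Fujiwara--Algoet inequalities
\begin{align*}
  &1+\lambda_x+\lambda_y+\lambda_z\geq 0, && 1+\lambda_x-\lambda_y-\lambda_z\geq 0,\\
  &1-\lambda_x+\lambda_y-\lambda_z\geq 0, && 1-\lambda_x-\lambda_y+\lambda_z\geq 0
\end{align*}
hold. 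Complete positivity together with the built-in trace preservation and unitality then makes $\mathbf{S}$ the desired unital \textsc{cptp} extension of $S$.

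Substituting $\lambda_x=a$, $\lambda_y=b$ and solving the four inequalities for $\lambda_z$, the first and fourth yield the lower bound $\lambda_z\geq -1+|a+b|$, while the second and third yield the upper bound $\lambda_z\leq 1-|a-b|$. Hence an admissible $\lambda_z$ exists exactly when $|a+b|+|a-b|\leq 2$. The final step is the elementary identity $|a+b|+|a-b|=2\max\{|a|,|b|\}$, which collapses the required inequality to $\max\{|a|,|b|\}\leq 1$---precisely the positivity hypothesis on $S$. Choosing any $\lambda_z$ in the (now nonempty) interval gives the extension; the case $\lambda_y=-b$ is handled identically after replacing $b$ by $-b$.

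The main subtlety I expect is recognizing that the naive trivial extension $\mathbf{S}(\sigma_z)=0$ (i.e.\ $\lambda_z=0$) is generally \emph{not} completely positive: taking $\lambda_z=0$ in the inequalities would force the strictly stronger condition $|a\pm b|\leq 1$, which can fail even when $|a|,|b|\leq 1$. Thus the one-parameter freedom in $\lambda_z$ must genuinely be exploited, and the whole argument turns on the interval for $\lambda_z$ being nonempty, which is exactly where the positivity bound $\max\{|a|,|b|\}\leq 1$ enters. Once this is seen, the remaining verification is the short interval computation above.
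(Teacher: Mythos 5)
Your proposal is correct and follows essentially the same route as the paper: extend $S$ to a Pauli diagonal map with a free eigenvalue for $\sigma_z$, impose the Fujiwara--Algoet conditions, and observe that the resulting interval for that eigenvalue is nonempty precisely when $|a+b|+|a-b|=2\max\{|a|,|b|\}\leq 2$, i.e.\ when $S$ is positive. Your added remark that the trivial choice $\lambda_z=0$ generally fails complete positivity is accurate and explains why the paper reserves that choice for the merely \textsc{ptp} extension.
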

\begin{proof}
We show that there exists $\mathbf{S}$ which is also a Pauli diagonal map, that is,
$$  \mathbf{S}(\sigma_z) = c \sigma_z . $$
The Pauli diagonal map $\mathbf{S}$ is \textsc{cptp} if and only if the well-known Fujiwara-Algoet conditions are satisfied \cite{Algoet,Ruskai}
\begin{equation}\label{FA}
  |a+b| \leq 1 + c \ , \ \ |a-b|\leq 1 - c\,,
\end{equation}
that is,
\begin{equation*}
  |a+b| - 1 \leq  c \leq 1 - |a-b| \,.
\end{equation*}
Clearly,  there exists a non-trivial solution for $c$ if and only if
  \begin{equation*}
    2 \ge |a-b| + |a+b| = 2 \max\{|a|,|b|\} ,
  \end{equation*}
which concludes the proof.
\end{proof}

It turns out that the above Proposition also has a simple geometric representation:
\begin{rem} It is well known \cite{Ruskai,Ruskai2,Karol} that the parameters $(a,b,c)$ of the Pauli diagonal unital \textsc{cptp} maps form a tetrahedron defined by Fujiwara-Algoet conditions (\ref{FA}) with vertices $\mathbf{x}_0=(1,1,1)$, $\mathbf{x}_1=(1,-1,-1)$, $\mathbf{x}_2=(-1,1,-1)$ and $\mathbf{x}_3=(-1,-1,1)$. These four vertices correspond to four unitary Pauli channels --- $\mathcal{U}_\alpha(\rho) = \sigma_\alpha \rho \sigma_\alpha$, with $\sigma_0 = \openone$, and $\{1,2,3\}=\{x,y,z\}$. Projecting the tetrahedron along the $z$-axis onto the $xy$-plane one obtains the unit square of parameters $a,b$ satisfying $|a|,|b|\leq 1$, cf.~Figure \ref{fig}. Hence it is clear that starting from any point in the unit square and going along the $z$-axis one eventually ends up inside the tetrahedron, thus realizing a \textsc{cptp} extension of a \textsc{ptp} map from the unit square.
\end{rem}

}

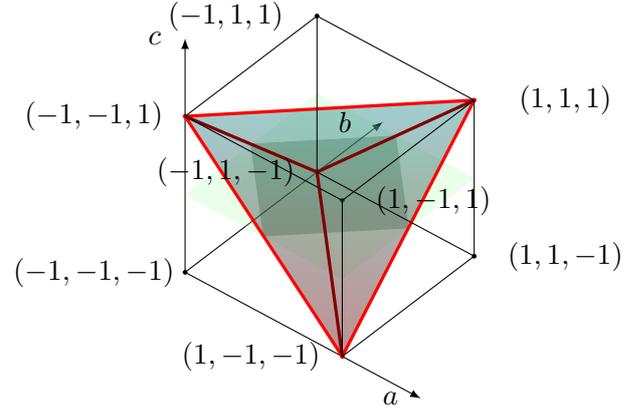
\begin{figure}
  \begin{tikzpicture}[scale=0.45]
    \def \alpha	{ 40 }
    \def \beta 	{ -40 }
    \def \a{ 6. }

    \coordinate (A) at (0,0);
    \coordinate (B) at ({	\a*cos(\alpha)			},{	\a*sin(\alpha)*sin(\beta)				});
    \coordinate (C) at ({	\a*sin(\alpha)			},{	-\a*cos(\alpha)*sin(\beta)				});
    \coordinate (D) at ({	0				},{	\a*cos(\beta)						});
    \coordinate (E) at ({	\a*(cos(\alpha) + sin(\alpha))	},{	\a*(sin(\alpha)-cos(\alpha))*sin(\beta)			});
    \coordinate (F) at ({	\a*sin(\alpha)			},{	-\a*cos(\alpha)*sin(\beta)+\a*cos(\beta)		});
    \coordinate (G) at ({	\a*cos(\alpha)			},{	\a*sin(\alpha)*sin(\beta)+\a*cos(\beta)			});
    \coordinate (H) at ({	\a*cos(\alpha)+\a*sin(\alpha)	},{	\a*((sin(\alpha)-cos(\alpha))*sin(\beta)+cos(\beta))	});

    \coordinate (BD) at ({ \a*cos(\alpha)/2			},{	\a*(sin(\alpha)*sin(\beta) + cos(\beta))/2 		});
    \coordinate (BH) at ({ \a*(2*cos(\alpha)+sin(\alpha))/2	},{	\a*((2*sin(\alpha)-cos(\alpha))*sin(\beta)+cos(\beta))/2});
    \coordinate (CD) at ({ \a*sin(\alpha)/2			},{	\a*(-cos(\alpha)*sin(\beta)+cos(\beta))/2		});
    \coordinate (CH) at ({ \a*(cos(\alpha)+2*sin(\alpha))/2	},{	\a*((sin(\alpha)-2*cos(\alpha))*sin(\beta)+cos(\beta))/2});
    \coordinate (A0) at ({ 0					},{	\a*cos(\beta)/2						});
    \coordinate (B0) at ({	\a*cos(\alpha)			},{	\a*sin(\alpha)*sin(\beta)+\a*cos(\beta)/2		});
    \coordinate (C0) at ({	\a*sin(\alpha)			},{	-\a*cos(\alpha)*sin(\beta)+\a*cos(\beta)/2		});
    \coordinate (E0) at ({	\a*(cos(\alpha) + sin(\alpha))	},{	\a*(sin(\alpha)-cos(\alpha))*sin(\beta)+\a*cos(\beta)/2	});

    \fill[black]  	(A) circle [radius=2pt];
    \fill[red]    	(B) circle [radius=2pt];
    \fill[red]		(C) circle [radius=2pt];
    \fill[red]    	(D) circle [radius=2pt];
    \fill[black]   	(E) circle [radius=2pt];
    \fill[black] 	(F) circle [radius=2pt];
    \fill[black]   	(G) circle [radius=2pt];
    \fill[red] 		(H) circle [radius=2pt];


    \draw[->,>=latex] (0,0) -- ({	1.5*\a*cos(\alpha)	},{	1.5*\a*sin(\alpha)*sin(\beta)		});
    \draw[->,>=latex] (0,0) -- ({	1.5*\a*sin(\alpha)	},{	-1.5	*\a*cos(\alpha)*sin(\beta)	});
    \draw[->,>=latex] (0,0) -- ({	0			},{	1.5*\a*cos(\beta)			});

%
%
%
%

    \filldraw[draw=red,color=green!35!, 	opacity=.25] (A0) -- (B0) -- (E0) -- (C0);
    \filldraw[draw=red,color=green!25!black, 	opacity=.25] (BD) -- (CD) -- (CH) -- (BH);
    \filldraw[draw=red,bottom color=red!75!black, top color=cyan!75, opacity=.25]
      (B) -- (D)  -- (H);

    \draw[red,-,very thick] 		(B) -- (D)
					(D) -- (H)
					(H) -- (B);
    \draw[red!50!black,-,very thick] 	(C) -- (B)
					(C) -- (D)
					(C) -- (H);
    \draw 				(G) -- (B)
					(G) -- (D)
					(G) -- (H)
					(F) -- (C)
					(F) -- (D)
					(F) -- (H)
					(E) -- (C)
					(E) -- (B)
					(E) -- (H);

    \node[shift={(-1.2,0)}] at (A) {$(-1,-1,-1)$};
    \node[shift={(-1.2,0)}] at (B) {$(1,-1,-1)$};
    \node[shift={(-1.2,0)}] at (C) {$(-1,1,-1)$};
    \node[shift={(-1.2,0)}] at (D) {$(-1,-1,1)$};
    \node[shift={(1.2,0)}] at (E) {$(1,1,-1)$};
    \node[shift={(-1.2,0)}] at (F) {$(-1,1,1)$};
    \node[shift={(1.2,0)}] at (G) {$(1,-1,1)$};
    \node[shift={(1.2,0)}] at (H) {$(1,1,1)$};

    \node[shift={(-.4,0)}] at ({	1.5*\a*cos(\alpha)	},{	1.5*\a*sin(\alpha)*sin(\beta)		}) {$a$};
    \node[shift={(-.5,0)}] at ({	1.5*\a*sin(\alpha)	},{	-1.5	*\a*cos(\alpha)*sin(\beta)	}) {$b$};
    \node[shift={(-.4,0)}] at ({	0			},{	1.5*\a*cos(\beta)			}) {$c$};
\end{tikzpicture}

\caption{The tetrahedron represents Pauli diagonal channels with $a,b$ and $c$ being the respective eigenvalues for eigenvectors $\sigma_x,\sigma_y$ and $\sigma_z$. The horizontal plane represents Pauli maps with $c=0$.}\label{fig}
\end{figure}


Now, we are ready to prove the main theorem.
\begin{proof}[Proof of Theorem~\ref{mainthm}]
{ (i) $\Rightarrow$ (ii) is obvious and (ii) $\Rightarrow$ (iii) follows from the well-known fact that the trace-norm is contractive under the action of \textsc{ptp} maps.}
%
To show  (iii) $\Rightarrow$ (i) let us observe that the very condition (\ref{MAIN}) is equivalent to
 \begin{align} \label{MAIN-1}
 \lVert \alpha \openone + \beta a \sigma_x &+ \gamma b \sigma_y \rVert_1 \nonumber\\
 &\le \lVert \alpha \openone + \beta \sigma_x + \gamma \sigma_y \rVert_1
  \end{align}
for all $\alpha, \beta, \gamma \in \mathbb{R}$ {
with $a,b$ being the parameters of the Pauli diagonal channel $S$ as above}. Indeed, assume (\ref{MAIN}), i.e.~for all $\alpha,\beta,\gamma$
 \begin{align*}
 \|\alpha\openone+\beta V^\conj \tau_1V+\gamma &V^\conj \tau_2V\|_1=\|\alpha\openone+\beta \tau_1+\gamma \tau_2\|_1\\
&  \leq \|\alpha \openone+ \beta \rho_1+ \gamma\rho_2\|_1\\
&= \|\alpha \openone+ \beta U^\conj \rho_1U + \gamma U^\conj \rho_2U\|_1
 \end{align*}
with $U,V$ being the unitary matrices from Proposition \ref{prop_unitary_channel}, using unitary equivalence of the trace norm.
Writing
 $$
 U^\conj \rho_jU=\frac{\openone}{2}+r_{jx}\sigma_x+r_{jy}\sigma_y
 $$
 for some $r_{jx},r_{jy}\in\mathbb R$ yields
 $$
 V^\conj \tau_jV=\frac{\openone}{2}+ar_{jx}\sigma_x+br_{jy}\sigma_y\,.
 $$
 Thus the trace norm condition (\ref{MAIN})  becomes:
\begin{align} \label{MAIN-2}
\lVert \alpha' \openone + \beta' a \sigma_x &+ \gamma' b \sigma_y \rVert_1\nonumber\\
& \le \lVert \alpha' \openone + \beta' \sigma_x + \gamma' \sigma_y \rVert_1 ,
\end{align} 
 
where
$$
\begin{bmatrix} \alpha' \\ \beta' \\ \gamma' \end{bmatrix} = \begin{bmatrix} 1&\frac12&\frac12\\0&r_{1x}&r_{2x}\\0&r_{1y}&r_{2y} \end{bmatrix} \begin{bmatrix} \alpha \\ \beta \\ \gamma \end{bmatrix} .
$$
{Now, observe that}
 $$
 \det\begin{bmatrix} 1&\frac12&\frac12\\0&r_{1x}&r_{2x}\\0&r_{1y}&r_{2y} \end{bmatrix}=\det\begin{bmatrix} r_{1x}&r_{2x}\\r_{1y}&r_{2y} \end{bmatrix}\neq 0
 $$
because $\operatorname{dim}(\mathcal M)=3$. Hence condition (\ref{MAIN}) for all $\alpha,\beta,\gamma$ is equivalent to (\ref{MAIN-2}) for all $\alpha',\beta',\gamma'$, that is, it is equivalent to  (\ref{MAIN-1}) for all $\alpha,\beta,\gamma$.

Now, {as we have shown (iii) implies (\ref{MAIN-1}), the last step is to show that (\ref{MAIN-1}) implies (i)}. Choosing $(\alpha,\beta,\gamma)=(0,1,0)$ and $(\alpha,\beta,\gamma)=(0,0,1)$ in (\ref{MAIN-1}) implies $|a|\leq 1$ and $|b|\leq 1$ so there exists a \textsc{cptp} extension of $S$ (and hence also of $T$) due to Proposition \ref{EXT}.

(iii) $\Leftrightarrow$ (iv) is readily verified using homogeneity and continuity of the norm.

(v) $\Leftrightarrow$ (iv): By \cite[Thm.~7.1]{ando} and \eqref{AU} condition (v) is equivalent to $\|\tfrac{\openone}{2}-s(\tau_1-t\tau_2)\|_1\leq\|\tfrac{\openone}{2}-s(\rho_1-t\rho_2)\|_1$ for all $s,t\in\mathbb R$ which by the same argument as before is equivalent to (iv).
\end{proof}



{

\begin{rem}
  Note that an input set $\{ \rho_0 = \openone/2,\rho_1,\rho_2\}$ can be always represented by real matrices and hence it fits the scenario of  \cite{buscemi1}. Therefore, condition (\ref{RR}) is equivalent to all conditions from Theorem 1.
\end{rem}

}

\begin{proof}[Proof of Theorem~\ref{mainthm_2}]
The eigenvalue formula for $2\times 2$ matrices directly yields that if $A,B\in M_2(\mathbb C)$ are hermitian and of same trace then $A\prec B$ iff $\det(A)\geq \det(B)$. Using \eqref{eq:det_sum}, condition (v) from Theorem \ref{mainthm} is equivalent to
\begin{align*}
t^2\big(\det(\tau_2)-\det(\rho_2)\big)&-t\big( \tr( \rho_1^\adjug \rho_2 )-\tr( \tau_1^\adjug \tau_2 )  \big)\\
& +\big(\det(\tau_1)-\det(\rho_1)\big)\geq 0
\end{align*}
for all $t\in\mathbb R$. But a parabola $t\mapsto at^2+bt+c$ is non-negative iff $a,c\geq 0$ and $b^2\leq 4ac$ which concludes the proof.
\end{proof}

\section{Example}{
Let us now present an example which hopefully demonstrates the power of our results. Consider the linear map
\begin{equation}\label{MAP-EX}
 \left[ \begin{array}{cc} \rho_{11} & \rho_{12} \\ \rho_{21}  & \rho_{22} \end{array} \right]   \mapsto 
\left[ \begin{array}{cc} \rho_{11} +(1-p)\, \rho_{22} & \kappa \rho_{12} \\ \kappa \rho_{21}  & p\, \rho_{22} \end{array} \right] ,
\end{equation}
which is \textsc{cptp} whenever $1\geq p \geq \kappa^2$. Choose $p=\kappa=\frac 12$ and recall that
every qubit state can be represented as
$$
 \left[ \begin{array}{cc} c&\gamma\sqrt{c(1-c)}\\\gamma^*\sqrt{c(1-c)}&1-c \end{array} \right]
$$
for some $c\in[0,1]$ and some $\gamma\in\mathbb C$, $|\gamma|\leq 1$. Thus we may consider the two valid density matrices:
\begin{align*}
  \rho_1 &= \left[ \begin{array}{cc} c & \frac{3i}{4}\sqrt{c(1-c)}\\-\frac{3i}{4}\sqrt{c(1-c)}&1-c \end{array} \right], \quad \\
  \rho_2 &= \left[ \begin{array}{cc} 0.2 & 0.4 \\ 0.4 & 0.8 \end{array} \right]
\end{align*}
with $c\in[0,1]$ arbitrary which are mapped to
\begin{align*}
  \tau_1 &= \left[ \begin{array}{cc} \frac12(1+c) & \frac{3i}{8}\sqrt{c(1-c)} \\ -\frac{3i}{8}\sqrt{c(1-c)} & \frac12(1-c) \end{array} \right], \quad \\
  \tau_2 &= \left[ \begin{array}{cc} 0.6 & 0.2 \\ 0.2 & 0.4 \end{array} \right] 
\end{align*}
by means of \eqref{MAP-EX}. The Alberti-Uhlmann condition is obviously satisfied since by constriction the $\tau_k$ are related to the $\rho_k$ via a \textsc{cptp} map.
Now let us use Theorem \ref{mainthm_2} to check for which $c$ one can find a unital extension: 
\begin{itemize}
\item $\det(\tau_1)\geq\det(\rho_1)$: one readily computes that this is satisfied iff $(1-c)(\frac14-\frac{21}{64}c)\geq 0$. Together with the base assumption $c\in[0,1]$ this leads to $c\in[0,\frac{16}{21}]$.
\item $\det(\tau_2)=0.2\geq 0=\det(\rho_2)$ $\checkmark$
\item Condition \eqref{MAIN_3} after a straightforward computation becomes
$
(0.7c-0.3)^2\leq 0.8(1-c)(\frac14-\frac{21}{64}c)
$ which together with $c\in[0,1]$ means $c\leq 0.6082\ldots$
\end{itemize}
While this fully solves the one-parameter problem let us look at two interesting and hopefully illuminating special cases.
\begin{ex}
Choosing $c=0$ we know there exists a unital channel which simultaneously maps $\{\rho_1=|1\rangle\langle 1|, \rho_2\}$ to $\{\tau_1=\openone/2$, $\tau_2\}$ with $\rho_2,\tau_2$ from above. 
Interestingly, it can also be found that a unital \textsc{cptp} map giving rise to this transformation is given by
\begin{widetext}
\begin{align*}
 \left[ \begin{array}{cc} \rho_{11} & \rho_{12} \\ \rho_{21}  & \rho_{22} \end{array} \right] \, \mapsto \,
\left[ \begin{array}{cc}  \frac{\rho_{11}+\rho_{22}}{2}+\frac{\rho_{12}+\rho_{21}}{8}&\frac{\rho_{12}+\rho_{21}}{4}\\\frac{\rho_{12}+\rho_{21}}{4}&\frac{\rho_{11}+\rho_{12}}{2}-\frac{\rho_{12}+\rho_{21}}{8}  \end{array} \right].
\end{align*}
\end{widetext}
\end{ex}
\begin{ex}
For the second example choose $c=\frac23$ so by the above analysis no unital \textsc{cptp} map simultaneously transforms
\begin{equation*}
  \rho_1 = \left[ \begin{array}{cc} \frac23 & \frac{i\sqrt{2}}{8}\\-\frac{i\sqrt{2}}{8}&\frac13 \end{array} \right], \qquad
  \rho_2 = \left[ \begin{array}{cc} 0.2 & 0.4 \\ 0.4 & 0.8 \end{array} \right]
\end{equation*}
into
\begin{equation*}
  \tau_1 = \left[ \begin{array}{cc} \frac56&\frac{i\sqrt{2}}{16}\\-\frac{i\sqrt{2}}{16}&\frac16 \end{array} \right], \qquad
  \tau_2 = \left[ \begin{array}{cc} 0.6 & 0.2 \\ 0.2 & 0.4 \end{array} \right] \,.
\end{equation*}
However, $\tau_1\prec\rho_1$ and $\tau_2\prec\rho_2$ so by \cite[Thm.~7.1]{ando} one can find unital channels $T_1,T_2$ such that $T_j(\rho_j)=\tau_j$ for $j=1,2$.
This is what makes this example remarkable: while there exists a \textsc{cptp} map sending $\{\rho_1,\rho_2\}$ to $\{\tau_1,\tau_2\}$ (i.e.~\eqref{MAP-EX} for $p=\kappa=\frac12$) and while there exist unital channels which map either $\rho_1$ to $\tau_1$ or $\rho_2$ to $\tau_2$, there is no \textbf{simultaneous} unital \textsc{cptp} transformation. Thus this example clearly emphasizes the necessity of Theorem \ref{mainthm}.
\end{ex}

}

\section{Conclusions \& Outlook}

In this paper we derived necessary and sufficient conditions for the existence of a unital quantum channel mapping a pair of qubit states $\{\rho_1,\rho_2\}$ to $\{\tau_1,\tau_2\}$. These conditions connect the problem to trace-norm inequalities (in the spirit of Alberti-Uhlmann \eqref{AU} which is reproduced by setting $\alpha =0$ in \eqref{MAIN-1}) and majorization on matrices. Moreover we reduced the infinite family of conditions to just three inequalities which are simple enough to be verified with pen and paper. We also provided an example of two pairs of qubit states $\tau_k\prec\rho_k$ which satisfy the Alberti-Uhlmann condition, that is, there exists a quantum channel mapping $\rho_k$ to $\tau_k$, but condition (\ref{MAIN_3}) is violated meaning there is no single unital channel between $\rho_k$ and $\tau_k$. { The authors of \cite{heino} provide an example of a qutrit system (i.e.~3-dimensional) where pairs of states $\rho_1,\rho_2$ and $\tau_1,\tau_2$ satisfy the Alberti-Uhlmann condition (\ref{AU}) but there is no quantum channel mapping $\rho_k \to \tau_k$ (cf. \cite{heino}, Sec. VII B). Actually, the same example may be used to show that $\{\openone/2,\rho_1,\rho_2\}$ and $\{\openone/2,\tau_1,\tau_2\} $ satisfy (\ref{MAIN}) but there is no (unital) quantum channel mapping $\rho_k \to \tau_k$.    }

We expect that our result will encourage more research in this direction and shed light on finding more general closed form conditions for the existence of channels between sets of quantum states.
Possible next steps could focus on the case of the input set consisting of any three linearly independent qubit states or---in spirit of thermo- and general $D$-majorization \cite{vE3}---on how to modify Theorem \ref{mainthm} \& \ref{mainthm_2} if the fixed point of the channel is not the identity but an arbitrary Gibbs state (i.e.~an arbitrary positive-definite state $D$).

\section*{Acknowledgements}
The authors are grateful to Teiko Heinosaari, Antonella De Pasquale, Namit Anand and Francesco Buscemi for fruitful discussion {as well as the anonymous referees'} constructive comments.
S.C., D.C.~and G.S.~were supported by the Polish National Science
Centre project 2018/30/A/ST2/00837. F.v.E.~is supported by the Bavarian excellence network {\sc enb} via the \mbox{International} PhD Programme of Excellence {\em Exploring Quantum Matter} ({\sc exqm}).

%
%
%
%
%

\bibliographystyle{plainnat}
\bibliography{alberti_uhlmann_ver_Oct_01.bib}

\end{document}